\documentclass[12pt]{article}
\usepackage[utf8]{inputenc}
\usepackage{amsmath}
\usepackage{physics}
\usepackage{graphicx}
\usepackage{mathtools}
\usepackage{amssymb}
\usepackage{bm}

\usepackage{amsthm}
\usepackage{float}
\usepackage{hyperref}

\usepackage{tikz}
\textheight 9in
\setlength\textwidth{6.9in}
\topmargin -0.4in
\setlength\oddsidemargin{-0.2in}
\setlength\evensidemargin{0in}
\setlength\parskip{0.0in} 
\usepackage{algorithm}
\usepackage{algorithmic}

\usepackage{amsmath}
\usepackage{amsthm}
\usepackage[capitalise]{cleveref}

\DeclareMathOperator{\dist}{dist}
\DeclareMathOperator{\maxdet}{\mathsf{MAXDET}}
\DeclareMathOperator{\maxvol}{\mathsf{MAXVOL}}

\DeclareMathOperator{\greedy}{\mathsf{Greedy}}
\DeclareMathOperator{\ls}{\mathsf{LS}}
\DeclareMathOperator{\diag}{diag}

\newcommand{\RR}{\mathbb{R}}

\newcommand{\eps}{\epsilon}
\DeclareMathOperator{\Span}{span}

\DeclareMathOperator{\vol}{\mathsf{vol}}
\DeclareMathOperator{\Vol}{\mathsf{vol}}

\newcommand{\Gc}{\mathcal{G}}

\newcommand{\Hc}{\mathcal{H}}

\newtheorem{theorem}{Theorem}
\newtheorem{conjecture}[theorem]{Conjecture}
\newtheorem{definition}[theorem]{Definition}
\newtheorem{lemma}[theorem]{Lemma}
\newtheorem{remark}[theorem]{Remark}

\newtheorem{claim}[theorem]{Claim}
\usepackage[bottom]{footmisc}

\tikzset{every picture/.style={line width=0.75pt}} 

\title{Composable Coresets for Determinant Maximization: Greedy is Almost Optimal\footnote{This is an equal contribution paper}}
\author{%
Siddharth Gollapudi\\
Microsoft Research\\
\texttt{sgollapu@berkeley.edu} \\
\and
Sepideh Mahabadi\\
Microsoft Research\\
\texttt{smahabadi@microsoft.com} \\
\and
Varun Sivashankar\\
Microsoft Research\\
\texttt{varunsiva@princeton.edu}\\
}

\date{}
\begin{document}

\maketitle

\begin{abstract}
Given a set of $n$ vectors in $\RR^d$, the goal of the \emph{determinant maximization} problem is to pick $k$ vectors with the maximum volume.
 Determinant maximization is the MAP-inference task for determinantal point processes (DPP) and has recently received considerable attention for modeling diversity.
 As most applications for the problem use large amounts of data, this problem has been studied in the relevant \textit{composable coreset} setting.
In particular, \cite{indyk2020composable} showed that one can get composable coresets with optimal approximation factor of $\tilde O(k)^k$ for the problem, and that a local search algorithm achieves an almost optimal approximation guarantee of $O(k)^{2k}$.
In this work, we show that the widely-used Greedy algorithm also provides composable coresets with an almost optimal approximation factor of $O(k)^{3k}$, which improves over the previously known guarantee of $C^{k^2}$, and supports the prior experimental results showing the practicality of the greedy algorithm as a coreset.
Our main result follows by showing a local optimality property for Greedy:
swapping a single point from the greedy solution with a vector that was not picked by the greedy algorithm can increase the volume by a factor of at most $(1+\sqrt{k})$. This is tight up to the additive constant $1$. Finally, our experiments show that the local optimality of the greedy algorithm is even lower than the theoretical bound on real data sets. 
\end{abstract}
\vspace{500pt}

\newpage

\section{Introduction}
In the \emph{determinant maximization} problem, we are given a set $P$ of $n$ vectors in $\mathbb{R}^d$, and a parameter $k\leq d$. The objective is to find a subset $S = \{v_1,\ldots,v_k\} \subseteq P$ consisting of $k$ vectors such that 
that the volume squared of the parallelepiped spanned by the points in the subset $S$ is maximized.
Equivalently, the volume squared of a set $S$, denoted by $\vol(S)$, is equal to the  determinant of the Gram matrix of the vectors in $S$. 
Determinant maximization is the MAP-inference of determinantal point processes, and both of these problems as well as their variants have found numerous applications in data summarization, machine learning, experimental design, and  computational geometry. In particular, the determinant of a subset of points is one way to measure the \emph{diversity} of the subset, and thus they have been studied extensively over the last decade in this context \cite{mirzasoleiman2017streaming,gong2014diverse, kulesza2012determinantal, chao2015large, kulesza2011learning, yao2016tweet, lee2016individualness}.

The best approximation factor for the problem in this regime is due to the work of \cite{nikolov2015randomized} who shows a factor of $e^{k}$, and it is known that an exponential dependence on $k$ is necessary \cite{civril2013exponential} unless P = NP.  However, the most common algorithm used for this problem in practical applications is a natural \emph{greedy} algorithm. In this setting, the algorithm first picks the vector with the largest norm, and then greedily picks the vector with largest perpendicular component to the subspace spanned by the current set of picked vectors, thus maximizing the volume greedily in each iteration. This algorithm is known to have an approximation factor of $(k!)^2$\cite{cm-smvsm-09}.

As in most applications of determinant maximization one needs to work with large amounts of data, there has been an increasing interest in studying determinant maximization in large data models of computation  \cite{mirzasoleiman2017streaming,wei2014fast,pan2014parallel, mirzasoleiman2013distributed,mirzasoleiman2015distributed,mirrokni2015randomized,barbosa2015power}. One such model that we focus on in this work is the \emph{composable coreset} setting \cite{IMMM-ccdcm-14}. Intuitively, composable coresets are small ``summaries'' of a data set with the composability property: for the summaries of multiple datasets, the union of the summaries should make a good summary for the union of the datasets. More precisely, in this setting, instead of a single set of vectors $P$, there are $m$ sets $P_1,\ldots,P_m \subseteq \RR^d$. In this context, a mapping function $c$  that maps a point set to one of its subsets is called \textit{$\alpha$-composable coreset} for determinant maximization, if for any collection of point sets $P_1,\ldots,P_m$,
\begin{equation}
\maxdet_k \left(\cup_{i=1}^m c(P_i) \right) \geq \frac{1}{\alpha} \cdot \maxdet_k \left(\cup_{i=1}^m P_i \right)
\end{equation}

where $\maxdet_k$ is used to denote the maximum achievable determinant with parameter $k$. (Similarly, $\maxvol_k$ is used to denote the maximum volume, with $\maxvol_k^2 = \maxdet_k$.) For clarity, we note that the mapping function $c$ can only view its input data set $P_i$ and has no knowledge of other data sets while constructing $c(P_i)$. \cite{IMMM-ccdcm-14} showed that a composable coreset for a task automatically gives an efficient distributed and an efficient streaming algorithm for the same task.

Indeed, composable coresets have been used for determinant maximization. In particular, \cite{indyk2020composable, mahabadi2019composable}, presented a composable coreset of size $O(k \log k)$ with approximation factor of $\tilde O(k)^k$ using spectral spanners, which they showed to be almost tight. In particular, the best approximation factor one can get is $\Omega(k^{k-o(k)})$ (Theorem 1.4). As the above algorithm is LP-based and does not provide the best performance in practice, they proposed to use the greedy algorithm followed by a local search procedure, and showed that this simple algorithm also yields a coreset of size $k$ with an almost optimal approximation guarantee of $O(k)^{2k}$. They also proved that the greedy algorithm alone yields a $C^{k^2}$ guarantee for composable coresets, which is far larger than the optimal approximation of $\tilde O(k)^k$ for this problem. 

Since the greedy algorithm provides a very good performance in practice \cite{mahabadi2019composable, mirzasoleiman2013distributed}, an improved analysis of the greedy algorithm in the coreset setting is very desirable. Furthermore, both of these prior work implied that greedy performs well in practice in the context of distributed and composable coreset settings \cite{mirzasoleiman2013distributed}, and in particular its performance is comparable to that of the local search algorithm for the problem \cite{mahabadi2019composable}.

\paragraph{Our contribution.} In this paper, we close this theoretical gap: we prove that the greedy algorithm provides a $O(k)^{3k}$-composable coreset for the determinant maximization problem (\cref{main-theorem}). This explains the very good performance of this algorithm on real data previously shown in \cite{mahabadi2019composable, mirzasoleiman2013distributed}. We achieve this by proving an elegant linear algebra result on the local optimality of the greedy algorithm: 
swapping a single point from the greedy solution with a vector that was not picked by the greedy algorithm can increase the volume by a factor of at most $(1+\sqrt{k})$. We further show that this is tight up to the additive constant $1$.
As an application of our result, we give a proof that the locality property can recover and in fact marginally improve the $k!$ guarantee of the greedy algorithm of \cite{cm-smvsm-09} for the offline volume maximization problem.

Finally, in \cref{experiments-section}, we run experiments to measure the local optimality of the greedy algorithm on real data, and show that this number is much smaller in practice than the worst case theoretically guaranteed bound. In fact, in our experiments this number is always less than $1.5$ even for $k$ even as large as $300$. Again this explains the practical efficiency of the greedy algorithm as a coreset shown in \cite{mahabadi2019composable, mirzasoleiman2013distributed}.

\subsection{Preliminaries}
\subsubsection{The Greedy Algorithm} 
Recall the standard offline setting for determinant maximization, where one is required to pick $k$ vectors out of the $n$ vectors in $P$ of maximum volume. Here, \cite{cm-smvsm-09} showed that greedily picking the vector with the largest perpendicular distance to the subspace spanned by the current solution (or equivalently, greedily picking the vector that maximizes the volume as in \cref{alg:greedy}) outputs a set of vectors that is within $k!$ of the optimal volume. Formally, if $\greedy(P)$ is the output of \cref{alg:greedy}, then
\begin{equation}
\vol(\greedy(P)) \geq \frac{\maxvol_k(P)}{k!}  
\end{equation}

\subsubsection{Local Search for Composable Coresets}
\label{local-search}
In \cite{mahabadi2019composable}, the authors show that the greedy algorithm followed by the local search procedure with parameter $\epsilon$ (as described in \cref{alg:LS}) provides a $(2k(1+\epsilon))^{2k}$-composable coreset for determinant maximization. A locally optimal solution can thus be naturally defined as follows:
\begin{definition}[$(1+\epsilon)$-Locally Optimal Solution]
Given a point set $P \subseteq \RR^d$ and $c(P) \subseteq P$ with $|c(P)| = k$, we say $c(P)$ is a $(1+\epsilon)$-locally optimal solution for volume maximization if for any $v \in c(P)$ and any $w \in P \setminus c(P)$,
\begin{equation}
\vol(c(P) - v + w) \leq (1+\eps) \vol(c(P))  
\end{equation}
\end{definition}
Given the output of the greedy algorithm $\greedy(P)$, one can obtain a locally optimal solution using a series of swaps: if the volume of the solution can be increased by a factor of $(1+\eps)$ by swapping a vector in the current solution with a vector in the point set $P$ that has not been included, we make the swap. Since $\vol(\greedy(P))$ is within a factor of $k!$ of the optimal, we will make at most $\frac{k\log k}{\log(1+\eps)}$ swaps. This is precisely the local search algorithm (\cref{alg:LS}). For any point set $P$, we denote the output of \cref{alg:LS} by $\ls(P)$.

In \cite{mahabadi2019composable}, the authors prove that local search yields a $O(k)^{2k}$-composable coreset for determinant maximization. Formally, they prove the following. 

\begin{theorem}
\label{composable-theorem}
Let $P_1,\ldots,P_m \subseteq \RR^d$. For each $i = 1,\ldots,m$, let $\ls(P_i)$ be the output of the local search algorithm (\cref{alg:LS}) with parameter $\epsilon$. Then
\begin{equation}
\maxdet_k \left( \cup_{i=1}^m P_i \right) \leq (2k(1+\epsilon))^{2k} \maxdet_k \left( \cup_{i=1}^m \ls(P_i) \right) 
\end{equation}
\end{theorem}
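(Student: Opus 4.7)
The plan is to reduce the determinant statement to its square root, namely to show
\[\maxvol_k\bigl(\cup_{i=1}^m P_i\bigr) \;\leq\; (2k(1+\epsilon))^k \cdot \maxvol_k\bigl(\cup_{i=1}^m \ls(P_i)\bigr),\]
from which the theorem follows by squaring. Let $O = \{o_1, \ldots, o_k\}$ achieve $\maxvol_k(\cup_i P_i)$; for every $j$ fix an index $f(j)$ such that $o_j \in P_{f(j)}$, and write $S_i := \ls(P_i)$. The central idea is to decompose each $o_j$ orthogonally relative to $\Span(S_{f(j)})$,
\[o_j \;=\; p_j + r_j \;=\; \sum_{v \in S_{f(j)}} \alpha_{j,v}\,v \;+\; r_j,\qquad r_j \perp \Span(S_{f(j)}),\]
and then analyze $\vol(O) = |o_1 \wedge \cdots \wedge o_k|$ by multilinear expansion, comparing the result to wedges of vectors drawn from $\cup_i S_i$.

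Local optimality of each $S_i$ supplies the needed control on the coefficients and residuals. Since $r_j \perp \Span(S_{f(j)})$, a direct wedge-product computation yields the Pythagorean identity
\[\vol(S_{f(j)} - v + o_j)^2 \;=\; \alpha_{j,v}^2 \vol(S_{f(j)})^2 + |r_j|^2 \vol(S_{f(j)} \setminus \{v\})^2\]
for each $v \in S_{f(j)}$. Combining this with the locally optimal guarantee $\vol(S_{f(j)} - v + o_j) \leq (1+\epsilon)\vol(S_{f(j)})$ (and handling separately the trivial case $o_j \in S_{f(j)}$, where $r_j = 0$) produces $|\alpha_{j,v}| \leq 1+\epsilon$ for every $v$ and, choosing $v = v_j^*$ to maximize $\vol(S_{f(j)} \setminus \{v\})$, the residual bound $|r_j| \cdot \vol(S_{f(j)} \setminus \{v_j^*\}) \leq (1+\epsilon)\vol(S_{f(j)})$.

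Next, expand $o_1 \wedge \cdots \wedge o_k = \bigwedge_j (p_j + r_j)$ and then each $p_j$ in the basis $S_{f(j)}$. The result is a sum, indexed by a subset $T \subseteq [k]$ selecting the residual coordinates and a choice function $\pi$ assigning $v_{\pi(j)} \in S_{f(j)}$ to each $j \notin T$, of terms
\[\Bigl(\prod_{j \notin T} \alpha_{j,v_{\pi(j)}}\Bigr) \cdot \bigwedge_{j \in T} r_j \;\wedge\; \bigwedge_{j \notin T} v_{\pi(j)}.\]
The coefficient contributes at most $(1+\epsilon)^{k-|T|}$. For the wedge, the idea is to use the residual inequality above to trade each $r_j$ for the perpendicular component of $v_j^*$ inside $\Span(S_{f(j)})$, paying a factor $1+\epsilon$ per trade; this reduces the wedge to one of $k$ vectors drawn from $\cup_i S_i$, whose magnitude is at most $\maxvol_k(\cup_i S_i)$. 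Counting terms via $\sum_{t=0}^k \binom{k}{t} k^{k-t} = (k+1)^k \leq (2k)^k$ and gathering the overall factor $(1+\epsilon)^k$ delivers the claimed $(2k(1+\epsilon))^k$ bound.

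The delicate step will be the wedge-trade: the residual inequality is local to one part, whereas the wedge mixes residuals and basis vectors coming from many parts; in particular, the replacement $v_j^*$ could already appear among the chosen $v_{\pi(\cdot)}$, and perpendicularity of $r_j$ holds only against its own $S_{f(j)}$. I expect to handle this by grouping coordinates $j$ by their assigned part $f(j)$ and performing the replacement block-by-block, leveraging a Cauchy--Binet or Laplace-type factorization of the Gram determinant of $O$ that isolates each part's contribution before recombining, together with a careful greedy re-indexing of $\pi$ to avoid collisions between $v_j^*$ and prior $v_{\pi(j')}$'s. Once that block-wise replacement is justified, the remaining steps are standard counting and arithmetic.
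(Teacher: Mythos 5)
You have correctly identified the local information that local optimality provides: the Pythagorean identity $\vol(S-v+o_j)^2=\alpha_{j,v}^2\vol(S)^2+|r_j|^2\vol(S\setminus\{v\})^2$ and the consequences $|\alpha_{j,v}|\le 1+\epsilon$ and $|r_j|\cdot\vol(S\setminus\{v\})\le(1+\epsilon)\vol(S)$ are exactly right (modulo degenerate-rank caveats), and they are essentially the same raw ingredients used in the source of this theorem. Note that the paper itself does not prove \cref{composable-theorem}; it imports it from \cite{mahabadi2019composable}, where it is proved by a reduction to the $k$-directional height problem: for every $(k-1)$-dimensional subspace $H$ and every $p\in P$ one shows $\dist(p,H)\le 2k(1+\epsilon)\max_{v\in\ls(P)}\dist(v,H)$, and then the optimal set is converted into a coreset set by $k$ successive swaps, each losing a factor $2k(1+\epsilon)$. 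Your multilinear-expansion route is genuinely different in structure, but it has a real gap exactly at the step you flag as ``delicate,'' and I do not believe it can be repaired within the budget you have allocated.

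The wedge-trade fails for two reasons. First, replacing $r_j$ by (the perpendicular component of) $v_j^*$ at a cost of $1+\epsilon$ cannot be justified by comparing norms: the magnitude of a wedge is governed by the component of each factor orthogonal to the span of the \emph{other} factors in that term, and while $r_j$ is orthogonal to all of $\Span(S_{f(j)})$, the replacement $v_j^*$ may lie almost entirely inside the span of the other vectors $v_{\pi(j')}$ and $r_{j'}$ appearing in the term (these come from other parts $S_{i'}$, about which the local-optimality of $S_{f(j)}$ says nothing). Second, even the weaker statement you would need after applying Hadamard's inequality --- that for any configuration $Q$ of fewer than $k$ coreset vectors and any part $i$ some $v\in S_i$ satisfies $\dist(v,\Span Q)\ge h_i$, where $h_i=\vol(S_i)/\max_u\vol(S_i\setminus\{u\})$ is the quantity controlling $|r_j|$ --- is false: take $S_i=\{e_1,\dots,e_k\}$, so $h_i=1$, and $\Span Q=\{x:\sum_\ell x_\ell=0\}$; then every $v\in S_i$ has $\dist(v,\Span Q)=1/\sqrt{k}$. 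Hence each trade honestly costs an extra factor of order $\sqrt{k}$ beyond $(1+\epsilon)$, and since your accounting already spends the entire $(2k)^k$ on the $(k+1)^k$ terms of the expansion, the final bound comes out to roughly $(2k^{3/2}(1+\epsilon))^{2k}$ rather than $(2k(1+\epsilon))^{2k}$. To recover the stated constant you should abandon the term-by-term expansion and instead follow the $k$-swap scheme: for each $j$ in turn, bound $\dist(o_j,H)$ against $\max_{v\in S_{f(j)}}\dist(v,H)$ for the single hyperplane $H$ spanned by the current other $k-1$ vectors, absorbing both $\sum_v|\alpha_{j,v}|\le k(1+\epsilon)$ and the residual term into one factor of $2k(1+\epsilon)$ per swap.
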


\begin{remark}
Even though \cite{mahabadi2019composable} treats $\epsilon$ as a small constant in $[0,1]$, the proof for \cref{composable-theorem} above holds for any non-negative $\epsilon$.
\end{remark}

\subsection{Outline of our approach}
\label{greedy-proof}

In \cite{mahabadi2019composable}, the authors prove \cref{composable-theorem} for local search using a reduction to a related problem called $k$-directional height.
The authors then use similar ideas to prove that the output of the greedy algorithm is also a composable coreset for determinant maximization. However, since we do not know a priori whether greedy is $(1+\epsilon)$-locally optimal, the guarantee they obtain is significantly weaker: they only prove that the greedy algorithm yields a $((2k)\cdot 3^k)^{2k} = C^{k^2}$-composable coreset for determinant maximization. This is clearly far from the desired bound of $k^{O(k)}$.

To improve the analysis of the greedy algorithm in the coreset setting, we ask the following natural question:
\[\textit{Can we prove that the output of the greedy algorithm is already locally optimal?}\]

We answer this question positively. Our main result is \cref{local-theorem}, where we show that for any point set $P$, $\greedy(P)$ is a $(1+\sqrt{k})$-locally optimal solution. In other words, the greedy algorithm has the same guarantee as local search with the parameter $\epsilon = \sqrt{k}$. This circumvents the loose reduction from greedy to the $k$-directional height problem and directly implies the following improved guarantee for the greedy algorithm in the coreset setting:

\begin{theorem}
\label{main-theorem}
\begin{equation}
\maxdet_k \left( \cup_{i=1}^m P_i \right) \leq (2k(1+\sqrt{k}))^{2k} \maxdet_k \left( \cup_{i=1}^m \greedy(P_i) \right)
\end{equation}
\end{theorem}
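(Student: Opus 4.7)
I would derive \cref{main-theorem} by first establishing the local optimality of greedy and then invoking \cref{composable-theorem}. Concretely, the main step is to prove: for every point set $P$, every $v_i \in \greedy(P)$, and every $w \in P \setminus \greedy(P)$,
\[
\vol(\greedy(P) - v_i + w) \leq (1 + \sqrt{k})\,\vol(\greedy(P)).
\]
Once this is in hand, running the local search procedure in \cref{alg:LS} with parameter $\epsilon = \sqrt{k}$ starting at $\greedy(P)$ terminates immediately with no profitable swap, so $\ls(P) = \greedy(P)$. Plugging $\epsilon = \sqrt{k}$ into \cref{composable-theorem} (which holds for any $\epsilon \geq 0$ by the remark) then yields the factor $(2k(1+\sqrt{k}))^{2k}$ in \cref{main-theorem}.

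\paragraph{Analyzing a single swap.}
List $\greedy(P) = \{v_1,\ldots,v_k\}$ in the order picked, let $h_j = \dist(v_j, \Span(v_1,\ldots,v_{j-1}))$ be the greedy heights, and let $e_1,\ldots,e_k$ be the associated orthonormal Gram--Schmidt basis, so that $v_j = \sum_{m \leq j} R_{mj} e_m$ with $R$ upper triangular and $R_{jj} = h_j$. Decomposing $w = \sum_l b_l e_l + w^\perp$ with $w^\perp \perp \Span(v_1,\ldots,v_k)$, a short Cauchy--Binet expansion of the Gram determinant of the swapped matrix $[v_1 \mid \cdots \mid w \mid \cdots \mid v_k]$ yields the clean identity
\[
\left(\frac{\vol(\greedy(P) - v_i + w)}{\vol(\greedy(P))}\right)^2 \;=\; \gamma_i^2 + \frac{\|w^\perp\|^2}{d_i^2},
\]
where $d_i = \dist(v_i, \Span(v_j : j \neq i))$ and $\gamma_i = \langle w, n_i\rangle / d_i$ with $n_i$ the unit normal to $\Span(v_j : j \neq i)$ inside $\Span(v_1,\ldots,v_k)$. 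A direct computation shows that $(n_i)_m = 0$ for $m < i$, $(n_i)_i = d_i/h_i$, and $\sum_{m > i}(n_i)_m^2 = 1 - (d_i/h_i)^2$. The greedy optimality at step $j$ translates into the quadratic constraint $\sum_{m \geq j} b_m^2 + \|w^\perp\|^2 \leq h_j^2$ for every $j \in \{1,\ldots,k\}$.

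\paragraph{Main obstacle.}
Splitting $\gamma_i = b_i/h_i + d_i^{-1}\sum_{m > i} b_m (n_i)_m$, the first summand has absolute value at most $1$ by the step-$i$ constraint, and the tail should be controlled by Cauchy--Schwarz against the step-$(i+1)$ constraint together with the identity $\sum_{m>i}(n_i)_m^2 = 1 - (d_i/h_i)^2$. The $\|w^\perp\|^2/d_i^2$ term is handled analogously using the step-$k$ constraint. The real difficulty is that a naive Cauchy--Schwarz using only the single step-$i$ constraint produces the loose bound $V'/V \leq h_i/d_i$, which is essentially vacuous when the later greedy heights $h_{i+1},\ldots,h_k$ are much smaller than $h_i$. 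The core of the argument must therefore \emph{simultaneously} exploit the greedy constraints across multiple steps $j > i$; the $\sqrt{k}$ factor arises from aggregating contributions of up to $k$ such inequalities, while the additive $1$ comes from the $|b_i/h_i| \leq 1$ term and is what makes the final bound tight up to the additive constant promised by the authors.
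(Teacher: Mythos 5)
Your reduction is exactly the paper's: establish that $\greedy(P)$ is $(1+\sqrt{k})$-locally optimal, observe that local search with $\epsilon=\sqrt{k}$ then makes no swap so $\ls(P)=\greedy(P)$, and plug $\epsilon=\sqrt{k}$ into \cref{composable-theorem} (valid for any $\epsilon\ge 0$). That part is correct and complete. The problem is that the local optimality statement --- which is \cref{local-theorem}, the actual technical content of the paper --- is set up but not proved. Your identity $\left(\vol(V-v_i+w)/\vol(V)\right)^2=\gamma_i^2+\|w^\perp\|^2/d_i^2$ and the greedy constraints $\sum_{m\ge j}b_m^2+\|w^\perp\|^2\le h_j^2$ are correct, and $|b_i/h_i|\le 1$ is fine, but the two remaining terms are never actually bounded. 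Cauchy--Schwarz against the step-$(i+1)$ constraint gives the tail of $\gamma_i$ a bound of order $h_{i+1}/d_i$, and the step-$k$ constraint gives $\|w^\perp\|^2/d_i^2\le h_k^2/d_i^2$; since $d_i$ has no a priori lower bound in terms of the later heights $h_{i+1},\dots,h_k$ (only $d_i\le h_i$), neither quantity is $O(\sqrt{k})$ on its face. You acknowledge this yourself --- the ``Main obstacle'' paragraph says the argument \emph{must} exploit the constraints across multiple steps simultaneously, but does not say how. As written, the proof of the key inequality is missing.

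For comparison, the paper closes this gap by abandoning the normal-vector/distance formulation in favor of a determinant computation on the projected vectors $w_{i+1},\dots,w_{k+1}$ (the $v_j$ with their components along $\Span(v_1,\dots,v_{i-1})$ removed). Writing $B=[w_{i+1}|\cdots|w_{k+1}]$, the $v_i'$-direction is peeled off as a rank-one update $B^TB=A^TA+uu^T$, where $u$ collects the coefficients $\alpha_i^j$ of $v_i'$; the greedy inequalities $|\alpha_i^j|\le 1$ for all $j>i$ are used \emph{in aggregate} to get $\|u\|^2\le k\|v_i'\|^2$, which is precisely the ``simultaneous use of multiple constraints'' you anticipated and is where the $\sqrt{k}$ comes from. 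The factorization $A^TA=EDE^T$ with $E$ unit lower triangular and $D$ diagonal then lets \cref{det-lemma} convert the rank-one update into the factor $1+u^TM^{-1}u\le 1+k$, and a separate degenerate case $v_{k+1}'=0$ (where $A^TA$ is singular) is handled by a second determinant identity (\cref{nonzero-claim,zero-claim}). If you want to salvage your formulation, you would need to prove a quantitative lower bound on $d_i$ in terms of $h_i,\dots,h_k$ strong enough to absorb the tail terms; it is not clear this can be done directly, which is presumably why the paper routes the argument through the Gram-matrix factorization instead.
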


Thus, the greedy algorithm also provides a $(2k(1+\sqrt{k}))^{2k} = k^{O(k)}$-composable coreset for determinant maximization, which is near the optimal $\Omega(k^{k-o(k)})$.

\cref{main-section} is dedicated to proving that greedy is $(1+\sqrt{k}$)-locally optimal (\cref{local-theorem}). We also show that this local optimality result of $(1+\sqrt{k})$ for the greedy algorithm is tight up to the additive constant $1$. In \cref{experiments-section} we show that on real and random datasets, the local optimality constant $\epsilon$ is much smaller than the bound of $1+\sqrt{k}$, which serves as an empirical explanation for why greedy performs much better in practice than what the theoretical analysis suggests.

\vspace{30pt}
\begin{minipage}{0.49\textwidth}
\begin{algorithm}[H]
	\caption{Greedy Algorithm}
	\label{alg:greedy}
	\begin{algorithmic}
		\STATE {\bfseries Input:} A point set $P\subset \mathbb{R}^d$ and integer $k$.
		\STATE {\bfseries Output:} A set $\mathcal{C} \subset P$ of size $k$.		\STATE Initialize $\mathcal{C}=\emptyset$. 
		\FOR{$i=1$ {\bfseries to} $k$}
		\STATE Add $\text{argmax}_{p \in P\setminus{\mathcal{C}}} \Vol(\mathcal{C}+ p)$ to $\mathcal{C}$.
		\ENDFOR
		\STATE {\bfseries Return} $\mathcal{C}$.
	\end{algorithmic}
\vspace{125pt}
\end{algorithm}
\end{minipage}
\hfill
\begin{minipage}{0.49\textwidth}
\begin{algorithm}[H]
	\caption{Local Search Algorithm}
	\label{alg:LS}
	\begin{algorithmic}
		\STATE {\bfseries Input:} A point set $P\subset \mathbb{R}^d$, integer $k$, and  $\epsilon>0$.
		\STATE {\bfseries Output:} A set $\mathcal{C} \subset P$ of size $k$.
		\STATE Initialize $\mathcal{C}=\emptyset$. 
		\FOR{$i=1$ {\bfseries to} $k$}
		\STATE Add $\text{argmax}_{p \in P\setminus{\mathcal{C}}} \Vol(\mathcal{C}+ p)$ to $\mathcal{C}$.
		\ENDFOR

		\REPEAT
		\STATE If there are points $q \in P\setminus\mathcal{C}$ and $p \in \mathcal{C}$ such that
		\[\Vol(\mathcal{C}+q-p) \geq (1+\epsilon)\Vol(\mathcal{C})\] replace $p$ with $q$.
		\UNTIL{No such pair exists.}
		\STATE {\bfseries Return} $\mathcal{C}$.
	\end{algorithmic}
\end{algorithm}
\end{minipage}

\section{Greedy is Locally Optimal}
\label{main-section}
\begin{theorem}[Local Optimality]\label{local-theorem}
Let $V := \greedy(P) = \{v_1,\ldots,v_k\} \subseteq P$ be the output of the greedy algorithm. Let $v_{k+1} \in P \setminus V$ be a vector not chosen by the greedy algorithm. Then for all $i = 1,\ldots,k$,
\begin{equation}
\vol(V - v_i + v_{k+1}) \leq (1+\sqrt{k}) \vol(V)    
\end{equation}
\end{theorem}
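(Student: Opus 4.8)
The plan is to convert the volume ratio into a ratio of distances and then feed in the greedy selection rule. Writing $W_i := \Span(V \setminus \{v_i\})$ and taking $\vol$ to be the $k$-dimensional volume (the square root of the Gram determinant), the base-times-height decomposition gives $\vol(V) = \vol(V \setminus \{v_i\}) \cdot \dist(v_i, W_i)$ and $\vol(V - v_i + v_{k+1}) = \vol(V \setminus \{v_i\}) \cdot \dist(v_{k+1}, W_i)$, so the claim is equivalent to
$$\dist(v_{k+1}, W_i) \le (1 + \sqrt{k})\, \dist(v_i, W_i).$$
The value of this reformulation is that both sides are distances to the \emph{same} subspace $W_i$, so the statement genuinely compares how far the discarded greedy vector $v_i$ and the outside vector $v_{k+1}$ sit from $W_i$.

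Next I would record the two facts the greedy rule supplies. Let $U_j := \Span(v_1, \ldots, v_j)$ and $h_j := \dist(v_j, U_{j-1})$. Because at step $j$ the algorithm picks the available vector with the largest perpendicular distance to $U_{j-1}$, every unchosen vector, and $v_{k+1}$ in particular, satisfies $\dist(v_{k+1}, U_{j-1}) \le h_j$ for all $j$, and the same rule yields that the heights are non-increasing, $h_1 \ge \cdots \ge h_k$. Since $U_{i-1} \subseteq W_i$ the numerator already obeys the crude estimate $\dist(v_{k+1}, W_i) \le \dist(v_{k+1}, U_{i-1}) \le h_i$; the obstacle is that $\dist(v_i, W_i)$ can be strictly smaller than $h_i$, because the later vectors $v_{i+1}, \ldots, v_k$ may fill in the direction that $v_i$ contributed, so the numerator and denominator must be controlled together rather than separately.

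To organize the computation I would pass to the Gram--Schmidt frame $e_1, \ldots, e_k$ of $U_k$ (so that $v_j = \sum_{l \le j} a_{lj} e_l$ with $a_{jj} = h_j$), together with the unit vector $f$ along the component $w$ of $v_{k+1}$ orthogonal to $U_k$. Projecting everything onto $U_{i-1}^{\perp}$ collapses the problem to the case $i = 1$ in dimension $k - i + 1$: the ratio is unchanged, the greedy inequalities survive the projection, and $v_i$ becomes the first (largest-norm) vector of a smaller greedy instance. In these coordinates the greedy constraints on $v_{k+1}$ read $\sum_{t \ge l} c_t^2 + \|w\|^2 \le h_l^2$ with $c_l := \langle v_{k+1}, e_l\rangle$, and a short cofactor (Cramer) computation rewrites the target as $\dist(v_{k+1}, W_i)^2 = \langle v_{k+1}, u\rangle^2 + \|w\|^2$ measured against $\dist(v_i, W_i) = h_i \langle e_i, u\rangle$, where $u$ is the unit normal of $W_i$ inside $U_k$.

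The heart of the argument, and the step I expect to be the real obstacle, is to bound $\langle v_{k+1}, u\rangle$ and $\|w\|$ against $\dist(v_i, W_i)$ \emph{simultaneously}, exploiting that $u$ is orthogonal to $v_{i+1}, \ldots, v_k$ while $v_{k+1}$ is pinned down, through the nested inequalities above, relative to the same flag $U_{i-1} \subset U_i \subset \cdots \subset U_k$. The plan is to expand $\langle v_{k+1}, u\rangle = \sum_l c_l u_l$, use the orthogonality relations $\langle u, v_j\rangle = 0$ to eliminate the off-diagonal entries $a_{lj}$, and then apply Cauchy--Schwarz across the at most $k$ coordinate directions, so that each direction contributes a term bounded by a constant once the suffix constraints $\sum_{t \ge l} c_t^2 \le h_l^2$ are invoked; summing $k$ such contributions is what should produce the factor $\sqrt{k}$, while the additive $1$ comes from the separate orthogonal piece $\|w\|$ via the triangle inequality $\sqrt{A^2 + B^2} \le |A| + |B|$. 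Ensuring that the interaction between the decaying heights $h_l$ and the coefficients $c_l$ does not make this Cauchy--Schwarz step lossy is the crux of the proof. Finally I would verify tightness by reverse-engineering the equality case — a fast-decaying height sequence with $v_{k+1}$ spread across the greedy directions — which should realize the bound up to the additive constant $1$.
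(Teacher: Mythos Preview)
Your reduction is sound: rewriting the volume ratio as $\dist(v_{k+1},W_i)/\dist(v_i,W_i)$, passing to the Gram--Schmidt frame, and projecting onto $U_{i-1}^{\perp}$ to reduce to $i=1$ are all correct and match the paper's first moves. You have also correctly identified that the whole difficulty sits in one scalar inequality. The problem is that your plan for that inequality does not close.

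Write $u=\sum_l u_l e_l$ for the unit normal of $W_1$ inside $U_k$. The orthogonality relations $\langle u,v_j\rangle=0$ for $j\ge 2$ determine $u$ by back-substitution through a unit lower-triangular system whose off-diagonal entries are the $\alpha^j_l\in[-1,1]$; inverting such a system can make $|u_l|/|u_1|$ grow like $2^{l-1}$ (take all $\alpha^j_l=-1$). A direct Cauchy--Schwarz on $\sum_l c_l u_l$ therefore does not yield $\sqrt{k}\,h_1|u_1|$: the ``$k$ coordinate directions each contributing a bounded term'' picture is not what actually happens, and the suffix constraints $\sum_{t\ge l}c_t^2\le h_l^2$ alone do not tame the exponential growth of the $u_l$. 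Some cancellation specific to the triangular structure is needed, and your outline does not supply it. Separately, attributing the additive~$1$ to the orthogonal piece $\|w\|$ would require $\|w\|\le\dist(v_1,W_1)=h_1|u_1|$, which you have not argued (and in the paper the bound $1+\sqrt{k}$ in fact arises in the case $w=0$, while $w\ne0$ gives $\sqrt{k+1}$).

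The paper avoids analysing $u$ coordinate-by-coordinate. It forms the $(k-i+1)\times(k-i+1)$ Gram matrix $B^{T}B$ of the relevant $w_j$'s and observes that it is a \emph{rank-one} perturbation of a matrix $A^{T}A$ obtained by deleting the $v_i'$-components; the Matrix Determinant Lemma then gives $\det(B^{T}B)=(1+u^{T}(A^{T}A)^{-1}u)\det(A^{T}A)$ with $\|u\|^2\le k\|v_i'\|^2$. Factoring $A^{T}A=EDE^{T}$ with $E$ unit-triangular and $D$ diagonal in the heights bounds the extra factor by $1+k$ (and a parallel rank-one argument on the coefficient matrix handles the $v_{k+1}'=0$ case). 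The rank-one structure is precisely the device that replaces the coordinatewise control of $u$ you were aiming for.
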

\begin{proof}
If $\rank(P) < k$, then the result is trivial. So we may assume $\rank(P) \geq k$ and $V$ is linearly independent.
Fix any $v_i \in V$. Our goal is to show that $\vol(V - v_i + v_{k+1}) \leq (1+\sqrt{k}) \vol(V)$. This trivially holds when $i=k$ by the property of the greedy algorithm, so assume $1 \leq i \leq k-1$.\\

Let $\{v_1',\ldots,v_k',v_{k+1}'\}$ be the set of orthogonal vectors constructed by performing the Gram-Schmidt algorithm on $\{v_1,\ldots,v_k,v_{k+1}\}$. Formally, let $\Gc_t = \Span\{v_1,\ldots,v_t\}$. Define $v_1' = v_1$ and $v'_{t} = v_{t} - \Pi(\Gc_{t-1})(v_{t})$ for $t = 2,\ldots,k,k+1$, where $\Pi(\Gc)(v)$ denotes the projection of the vector $v$ onto the subspace $\Gc$. Note that
\[\vol(V) = \prod_{j=1}^k \|v_j'\|_2\]

For each $j = i+1,\ldots,k,k+1$, write 
\begin{align*}
v_j 
&= \Pi(\Gc_{i-1})(v_j) + \sum_{l=i}^{j} \alpha_l^j v_l' \\
&:= \Pi(\Gc_{i-1})(v_j) + w_j
\end{align*}
We must have that $|\alpha^l_j| \leq 1$ by the greedy algorithm because if $|\alpha_l^j| > 1$, the vector $v_j$ would have been chosen before $v_l$. Further, $\alpha^j_j = 1$ by definition of Gram-Schmidt. The vector $w_j$ is what remains of $v_j$ once we subtract its projection onto the first $i-1$ vectors.

We are interested in bounding the following quantity:
\begin{align*}
\vol(V - v_i + v_{k+1})
&= \vol(v_1,\ldots,v_{i-1},v_{i+1},\ldots,v_k, v_{k+1}) \\
&= \vol(v_1',\ldots,v_{i-1}',v_{i+1},\ldots,v_k, v_{k+1}) \\
&= \vol(v_1',\ldots,v_{i-1}',w_{i+1},\ldots,w_k, w_{k+1}) \\
&= \vol(v_1',\ldots,v_{i-1}') \cdot \vol(w_{i+1},\ldots,w_k, w_{k+1}) \\
&= \left(\prod_{j=1}^{i-1} \|v_j'\|_2 \right) \cdot \vol(w_{i+1},\ldots,w_k, w_{k+1})
\end{align*}

Therefore, it suffices to prove the following: 
\begin{equation}\label{claim-goal}
\vol(w_{i+1},\ldots,w_k, w_{k+1}) \leq (1+\sqrt{k}) \prod_{j=i}^k \|v_j'\|_2    
\end{equation}

To establish this, we consider two cases. Recall that $v_{k+1}' = v_{k+1} - \Pi(\Gc_k)(v_k)$. We analyze the cases where $v'_{k+1} \neq 0$ and $v'_{k+1} = 0$ separately, although the ideas are similar. In \cref{nonzero-claim} and \cref{zero-claim} below, we establish the desired bound stated in \cref{claim-goal} for $v'_{k+1} \neq 0$ and $v'_{k+1} = 0$ respectively. \cref{local-theorem} then follows immediately.
\end{proof}

To prove \cref{nonzero-claim} and \cref{zero-claim}, the following well-known lemma will be useful. A proof can be found in \cite{ding2007rankone}.

\begin{lemma}[Matrix Determinant Lemma]\label{det-lemma}
Suppose $M$ is an invertible matrix. Then
\begin{equation}
\det(M + uv^T) = (1 + v^T M^{-1}u)\det(M)    
\end{equation}
\end{lemma}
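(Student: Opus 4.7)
The plan is to derive the identity by computing the determinant of a carefully chosen $(n{+}1)\times(n{+}1)$ block matrix in two different ways using the Schur complement formula, and then equating the two expressions. Specifically, I would consider
\[
A \;=\; \begin{pmatrix} M & -u \\ v^T & 1 \end{pmatrix},
\]
where $M$ is the $n\times n$ invertible matrix from the statement, $u$ is a column vector, and $v^T$ is a row vector. The key observation is that $A$ has two natural invertible diagonal blocks ($M$ on one side and the scalar $1$ on the other), so the Schur complement formula can be applied from either direction.

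First I would eliminate the block $M$. Using invertibility of $M$, the usual block Gaussian elimination identity gives $\det(A) = \det(M)\cdot\det\!\bigl(1 - v^T M^{-1}(-u)\bigr) = \det(M)\cdot (1 + v^T M^{-1}u)$. Then I would eliminate the bottom-right scalar block $1$ instead; this produces the Schur complement $M - (-u)\cdot 1^{-1}\cdot v^T = M + uv^T$, yielding $\det(A) = 1\cdot \det(M + uv^T) = \det(M + uv^T)$. Equating the two computations of $\det(A)$ immediately gives $\det(M + uv^T) = (1 + v^T M^{-1} u)\det(M)$, which is the claim.

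There is essentially no real obstacle here: it is a one-line argument once the right block matrix has been written down. The only thing to watch is the sign in the top-right block, which is precisely what makes the Schur complement come out as $M + uv^T$ rather than $M - uv^T$. As an alternative route (closer to what one might find in \cite{ding2007rankone}), one could factor $M + uv^T = M(I + M^{-1}uv^T)$ and reduce to showing $\det(I + ab^T) = 1 + b^T a$ for vectors $a, b$; this in turn follows by observing that $I + ab^T$ acts as the identity on the hyperplane $\{x : b^T x = 0\}$ (giving $n-1$ eigenvalues equal to $1$), while the remaining eigenvalue is forced to be $1 + b^T a$ by the trace condition $\operatorname{tr}(I + ab^T) = n + b^T a$. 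I would present the block-matrix version, since it is slicker and avoids any eigenvalue bookkeeping.
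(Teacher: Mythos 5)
Your proof is correct and complete. Note that the paper itself does not prove \cref{det-lemma} at all---it simply cites \cite{ding2007rankone} for a proof---so there is no in-paper argument to compare against. Your block-matrix derivation is the standard one: both applications of the Schur complement formula are applied to invertible pivot blocks ($M$ and the scalar $1$ respectively), the sign on the $-u$ block is handled correctly so that the second elimination yields $M + uv^{T}$ rather than $M - uv^{T}$, and equating the two evaluations of $\det(A)$ gives exactly the claimed identity. The alternative eigenvalue argument you sketch for $\det(I + ab^{T}) = 1 + b^{T}a$ is also sound. Either version would serve as a self-contained replacement for the citation.
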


\begin{claim}\label{nonzero-claim}
Suppose $v_{k+1}' \neq 0$. Then 
\begin{equation}
\vol(w_{i+1},\ldots,w_k, w_{k+1}) \leq \left(\sqrt{k+1}\right) \prod_{j=i}^k \|v_j'\|_2   
\end{equation}
\end{claim}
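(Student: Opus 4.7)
My plan is to convert the volume into a single distance computation via the Matrix Determinant Lemma, and then lower-bound that distance by analyzing a triangular linear system that exploits the hypothesis $v_{k+1}' \neq 0$. Concretely, let $b_l := \|v_l'\|_2$ and $e_l := v_l'/b_l$ for $l = i,\ldots,k+1$, and form the $(k-i+2)\times(k-i+2)$ matrix $M$ whose columns are $v_i', w_{i+1}, \ldots, w_{k+1}$ expressed in the orthonormal basis $\{e_l\}_{l=i}^{k+1}$. Since $w_j = \sum_{l=i}^{j} \alpha_l^j v_l'$ with $\alpha_j^j = 1$, the matrix $M$ is upper triangular with diagonal $(b_i, b_{i+1}, \ldots, b_{k+1})$, so $\det(M) = \prod_{l=i}^{k+1} b_l$. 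Applying \cref{det-lemma} to the Gram matrix $M^T M$ (equivalently, the Schur complement identity) yields
\[
\vol(w_{i+1},\ldots,w_{k+1}) \cdot \dist\!\left(v_i',\,\Span(w_{i+1},\ldots,w_{k+1})\right) \;=\; \prod_{l=i}^{k+1} b_l,
\]
so the claim is equivalent to the distance lower bound $\dist(v_i', \Span(w_{i+1},\ldots,w_{k+1})) \ge b_{k+1}/\sqrt{k+1}$. The hypothesis $v_{k+1}' \ne 0$ (i.e.\ $b_{k+1} > 0$) is exactly what makes this span codimension-one inside $\Span(v_i',\ldots,v_{k+1}')$ and the distance positive.

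To estimate that distance, write the unit normal to $\Span(w_{i+1},\ldots,w_{k+1})$ as $\eta = \sum_{l=i}^{k+1} \eta_l e_l$. The $k-i+1$ orthogonality conditions $\langle \eta, w_j\rangle = 0$ for $j = i+1,\ldots,k+1$ form a triangular system: because $\alpha_j^j = 1$, each equation for $j$ determines $\eta_j$ from $\eta_i,\ldots,\eta_{j-1}$, leaving $\eta_i$ as the sole free parameter. Normalizing $\eta_i = 1$ gives $\dist(v_i', \Span(w_{i+1},\ldots,w_{k+1})) = b_i/\|\eta\|_2$, so the claim reduces to the norm bound $\|\eta\|_2^2 \le (k+1)(b_i/b_{k+1})^2$. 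Applying Cauchy--Schwarz to the last orthogonality equation $\sum_{l=i}^{k+1} \eta_l \alpha_l^{k+1} b_l = 0$ together with the greedy inequality $\sum_{l=i}^{k+1}(\alpha_l^{k+1})^2 b_l^2 \le b_i^2$ yields $\eta_{k+1}^2 b_{k+1}^2 \le (b_i^2 - b_{k+1}^2)\sum_{l=i}^k \eta_l^2$, hence $\|\eta\|_2^2 \le (b_i/b_{k+1})^2 \sum_{l=i}^k \eta_l^2$. It thus suffices to show $\sum_{l=i}^k \eta_l^2 \le k+1$.

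This last telescoping bound will be the main obstacle. A direct term-by-term induction on the earlier orthogonality equations, used to solve for $\eta_j$ and then apply Cauchy--Schwarz, picks up a factor $(b_i/b_j)^2$ at each step that can blow up when the $b_l$'s decay rapidly. To circumvent this I would rescale via $\tilde\eta_l := \eta_l b_l$, which turns the greedy perpendicular-norm inequalities $\sum_{m=l}^j(\alpha_m^j)^2 b_m^2 \le b_l^2$ into scale-free constraints on the rescaled recursion, and then apply a joint Cauchy--Schwarz across all $k-i+1$ orthogonality equations at once rather than a one-at-a-time induction. This should telescope cleanly to the constant $k+1$, exploiting the monotonicity $b_i \ge b_{i+1} \ge \cdots \ge b_{k+1}$ (a direct consequence of the greedy ordering) and the fact that the $k-i+1$ triangular equations leave only one free degree of freedom for $\eta$.
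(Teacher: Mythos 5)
Your reduction is set up correctly up to a point, and it is a genuinely different route from the paper's (the paper writes the Gram matrix of $(w_{i+1},\ldots,w_{k+1})$ as a rank-one update $A^TA+uu^T$ and applies the matrix determinant lemma together with an eigenvalue bound on $(A^TA)^{-1}$, rather than passing to the normal vector of the hyperplane $\Span(w_{i+1},\ldots,w_{k+1})$). The identity $\vol(w_{i+1},\ldots,w_{k+1})\cdot\dist(v_i',\Span(w_{i+1},\ldots,w_{k+1}))=\prod_{l=i}^{k+1}b_l$, the equivalence of the claim with $\|\eta\|_2^2\le(k+1)\,b_i^2/b_{k+1}^2$ for the normal $\eta$ normalized by $\eta_i=1$, and the Cauchy--Schwarz step on the last orthogonality equation are all correct.

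The gap is that the statement you reduce to, $\sum_{l=i}^{k}\eta_l^2\le k+1$, is false, so no rescaling or joint Cauchy--Schwarz can establish it. Already the first orthogonality equation forces $\eta_{i+1}=-\alpha_i^{i+1}b_i/b_{i+1}$, and the greedy constraints permit $|\alpha_i^{i+1}|$ close to $1$ while $b_{i+1}\ll b_i$: for instance $v_i'=e_1$ and $w_{i+1}=0.99\,e_1+0.1\,e_2$ is consistent with greedy and gives $\eta_{i+1}^2\approx 98$, far exceeding $k+1$. The correct target $\|\eta\|_2^2\le(k+1)\,b_i^2/b_{k+1}^2$ is only true because its right-hand side itself blows up as the $b_l$ decay; by applying Cauchy--Schwarz only to the last equation and then demanding an absolute bound $k+1$ on the remaining coordinates, you discard exactly the slack that makes the inequality true. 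Your final paragraph acknowledges this is ``the main obstacle'' but offers only an unexecuted hope that a rescaled, joint Cauchy--Schwarz ``should telescope''; since the intermediate goal it is aimed at is unattainable, the proof as proposed cannot be completed. A fix would have to keep the weights $b_l$ attached throughout (e.g.\ bound $\sum_l \eta_l^2 b_l^2$ against $b_i^2$ using the full greedy constraints $\sum_{m=l}^{j}(\alpha_m^j)^2b_m^2\le b_l^2$, then convert using $b_l\ge b_{k+1}$), or revert to a Gram-matrix argument in the spirit of \cref{det-lemma} as the paper does.
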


\begin{proof}
Define the matrix $B = \left[w_{i+1}|\cdots|w_k|w_{k+1}\right]$. Note that $\det(B^T B) = \vol(w_{i+1},\ldots,w_k, w_{k+1})^2$ is the quantity we are interested in bounding. For clarity,
\[B^T = 
\begin{bmatrix}
\alpha_i^{i+1} v_i' + v_{i+1}' \\
\alpha_i^{i+2} v_i' + \alpha_{i+1}^{i+2} v_{i+1}' + v_{i+2}'\\
\vdots \\
\alpha_i^{k+1} v_i' + \cdots + \alpha_k^{k+1} v_k' + v_{k+1}'
\end{bmatrix}
\]

We define the matrix $A$ by just removing the $v_i'$ terms from $B$ as follows:
\[A^T = 
\begin{bmatrix}
v_{i+1}' \\
\alpha_{i+1}^{i+2} v_{i+1}' + v_{i+2}'\\
\vdots \\
\alpha_{i+1}^{k+1} v_{i+1}' + \cdots + \alpha_k^{k+1} v_k' + v_{k+1}'
\end{bmatrix}
\]

Since $\langle v_i', v_j' \rangle = 0$ for all $j \neq i$, we have that
\[B^T B = A^T A + uu^T\]
where the column vector $u$ is given by
\[u = 
\|v_i'\| \cdot \begin{pmatrix}
\alpha_i^{i+1} & \alpha_i^{i+2} & \cdots & \alpha_i^{k+1}
\end{pmatrix}\]
Since $|\alpha_i^j| \leq 1$ for $j = i+1,\ldots,k+1$ by the nature of the greedy algorithm, we have that
\begin{equation}\label{eqn-tight-one}
\|u\|_2^2 \leq (k-i+1)\|v_i'\|_2^2 \leq k \|v_i'\|_2^2
\end{equation}

We now bound the desired volume quantity. Let $M = A^T A$. $M$ is clearly a positive semi-definite matrix. In fact, because we assumed that $v'_{k+1} \neq 0$, it will turn out that $M$ is positive definite and thus invertible. For now, assume that $M^{-1}$ exists. We will compute the inverse explicitly later. 
\begin{align*}
\vol(w_{i+1},\ldots,w_k, w_{k+1})^2
&= \det(B^T B) \\
&= \det(A^T A + uu^T) \\
&= (1+u^T M^{-1} u) \det(M) \hspace{1.5in} \text{ [by \cref{det-lemma}]}\\
&\leq \left(1 + k\|v_i'\|^2 \lambda_{\max}(M^{-1})\right)\det(M)
\end{align*}
where $\lambda_{\max}(M^{-1})$ is the largest eigenvalue of $M^{-1}$. We will now show that $M^{-1}$ does in fact exist and bound $\lambda_{\max}(M^{-1})$. Consider the matrix $E$ and $W$ defined as follows:
\[E = 
\begin{bmatrix}
1 & 0 & & \cdots & 0 \\
\alpha_{i+1}^{i+2} & 1 & 0 & \cdots & 0\\
\vdots \\
\alpha_{i+1}^{k+1} && \cdots & \alpha_k^{k+1} & 1
\end{bmatrix}
\hspace{0.5in}
W^T = 
\begin{bmatrix}
v'_{i+1} \\
v'_{i+2} \\
\vdots \\
v'_{k+1}
\end{bmatrix}
\]

It is easy to check that $EW^T = A^T$. Therefore,
\[M = A^T A = E W^T W E^T = E D E^T\]
where $D$ is the diagonal matrix given by
\[
D = \diag\left(\|v_{i+1}'\|_2^2,\ldots,\|v_{k+1}'\|_2^2\right)
\]

It is easy to see that $E$ has all eigenvalues equal to $1$, and so must be invertible with determinant $1$. The same is obviously true for $E^T$, $E^{-1}$ and $(E^T)^{-1}$ as well. It follows that
\[\det(M) = \det(D) = \|v_{i+1}'\|^2 \cdots \|v_{k+1}'\|^2\]
Since $v_{k+1}' \neq 0$, we have that $\|v_j'\|_2 > 0$ for all $j$, so $D^{-1}$ clearly exists. It follows that
\[M^{-1} = (A^T A)^{-1} = (E^T)^{-1} D^{-1} E^{-1}\]
\[\lambda_{\max}(M^{-1}) \leq \lambda_{\max}(D^{-1}) = \frac{1}{\|v_{k+1}'\|^2}\]

Therefore,
\begin{align*}
\vol(w_{i+1},\ldots,w_k, w_{k+1})^2
&\leq (1 + k \|v_i'\|^2 \lambda_{\max}(M^{-1}))\det(M) \\
&\leq \left(1 + \frac{k \|v_i'\|^2}{\|v_{k+1}'\|^2}\right)\det(M) \\
&= \det(M) + k \prod_{j=i}^k \|v_j'\|_2^2 \\
&\leq (1+k) \prod_{j=i}^k \|v_j'\|_2^2
\end{align*}
\end{proof}

\begin{claim}\label{zero-claim}
Suppose $v_{k+1}' = 0$. Then 
\begin{equation}
\vol(w_{i+1},\ldots,w_k, w_{k+1}) \leq \left(1+\sqrt{k}\right) \prod_{j=i}^k \|v_j'\|_2   
\end{equation}
\end{claim}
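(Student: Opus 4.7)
The plan is a continuity argument: reduce the degenerate case $v_{k+1}' = 0$ to the nonzero case handled by \cref{nonzero-claim} by perturbing $v_{k+1}$ along a direction orthogonal to $\Gc_k$. Fix any unit vector $e$ orthogonal to $\Gc_k = \Span\{v_1,\ldots,v_k\}$; if $d > k$ such a vector already exists in $\RR^d$, and if $d = k$ one can append a zero coordinate and work in $\RR^{d+1}$, which preserves all volumes. For $\epsilon > 0$ set $v_{k+1}^{(\epsilon)} := v_{k+1} + \epsilon e$.

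I would first check that replacing $v_{k+1}$ by $v_{k+1}^{(\epsilon)}$ in the Gram--Schmidt construction of the proof of \cref{local-theorem} leaves all inputs of \cref{nonzero-claim} in place. Concretely: $v_j'^{(\epsilon)} = v_j'$ for every $j \leq k$, since the perturbation only affects $v_{k+1}$; we have $v_{k+1}'^{(\epsilon)} = \epsilon e \neq 0$, so we are back in the non-degenerate regime; and $\alpha_l^{k+1,(\epsilon)} = \alpha_l^{k+1}$ for every $l \leq k$, because $e \perp v_l'$ makes the projection onto $v_l'$ insensitive to the perturbation, so in particular $|\alpha_l^{k+1,(\epsilon)}|\leq 1$ continues to hold. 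Also, as soon as $\epsilon \leq \|v_i'\|$ one has $\|v_{k+1}'^{(\epsilon)}\| = \epsilon \leq \|v_i'\|$, which is the other ingredient used in \cref{nonzero-claim} when bounding $\det M \leq \prod_{j=i}^k \|v_j'\|^2$. Since that argument never uses membership of $v_{k+1}$ in the input set $P$ but only these algebraic inequalities, its conclusion applies to the perturbed configuration:
\[
\vol(w_{i+1},\ldots,w_k,w_{k+1}^{(\epsilon)}) \;\leq\; \sqrt{k+1}\,\prod_{j=i}^k \|v_j'\|.
\]

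Finally I would take $\epsilon \to 0^+$. Since $e \perp \Gc_{i-1}$ as well, $w_{k+1}^{(\epsilon)} = w_{k+1} + \epsilon e \to w_{k+1}$, and continuity of the volume as a function of its argument vectors yields
\[
\vol(w_{i+1},\ldots,w_k,w_{k+1}) \;\leq\; \sqrt{k+1}\,\prod_{j=i}^k \|v_j'\| \;\leq\; (1+\sqrt{k})\,\prod_{j=i}^k \|v_j'\|,
\]
where the last step uses the elementary bound $\sqrt{k+1} \leq 1+\sqrt{k}$ valid for all $k \geq 0$.

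The step I expect to require the most care is verifying that the conclusion of \cref{nonzero-claim} genuinely transfers to the perturbed configuration: the vector $v_{k+1}^{(\epsilon)}$ need not belong to the original $P$, so one cannot simply invoke the greedy algorithm on it. The argument nevertheless succeeds because all of the ``greedy'' content used in \cref{nonzero-claim} is already encoded in the two inequalities $|\alpha_l^{k+1}|\leq 1$ and $\|v_{k+1}'\|\leq \|v_i'\|$ established for the original $v_{k+1}$, and both are preserved under the perturbation thanks to the choice $e \perp \Gc_k$.
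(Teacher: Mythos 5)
Your proof is correct, and it takes a genuinely different route from the paper's. The paper handles the degenerate case directly: it reorders the columns to $B' = [w_{k+1}|w_{i+1}|\cdots|w_k]$, factors $(B')^T = C(W')^T$ with $W' = [v_i'|\cdots|v_k']$ so that $\det((B')^T B') = \det(C)^2 \prod_{j=i}^k \|v_j'\|_2^2$, and then bounds $|\det(C)| \leq 1+\sqrt{k}$ by writing the coefficient matrix as a rank-one update $C' + e_1 x^T$ of a unitriangular matrix and applying \cref{det-lemma}. You instead perturb $v_{k+1}$ in a direction orthogonal to $\Gc_k$ and pass to the limit, reducing the zero case to \cref{nonzero-claim}. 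The reduction is sound: you correctly isolate the only facts about $v_{k+1}$ that the proof of \cref{nonzero-claim} consumes, namely $|\alpha_l^{k+1}| \leq 1$, $\|v_{k+1}'\| \leq \|v_i'\|$, and $v_{k+1}' \neq 0$, and all three survive the perturbation precisely because $e \perp \Gc_k$ leaves every $v_j'$ with $j \leq k$ and every coefficient $\alpha_l^{k+1}$ untouched; the limit step is justified since $\vol$ is the square root of a Gram determinant and hence continuous, and $w_{k+1}^{(\epsilon)} = w_{k+1} + \epsilon e \to w_{k+1}$. You are also right to flag that the statement of \cref{nonzero-claim} cannot be invoked as a black box (the perturbed vector need not lie in $P$, and greedy on the perturbed set need not reproduce the same ordering in the presence of ties), so your argument necessarily leans on the internals of that claim's proof rather than its statement --- which means it inherits whatever is in that proof, for better or worse, whereas the paper's treatment of the zero case is self-contained. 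In exchange, your route is shorter and yields the sharper constant $\sqrt{k+1}$ in place of $1+\sqrt{k}$ for the degenerate case, which is consistent with (and nearly matched by) the paper's tightness example, where $v_{k+1}'=0$ and the ratio equals $\sqrt{k}$.
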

\begin{proof}
The idea for this proof is similar to the previous claim. However, the main catch is that decomposing $B^T B$ into $A^T A + uu^T$ (as defined in the proof of \cref{nonzero-claim}) is no longer helpful because $v_{k+1}' = 0$ implies that $A^T A$ is not invertible. However, there is a simple workaround.

Define the matrix $B' = \left[w_{k+1}|w_{i+1}|\cdots|w_k\right]$. Note that $\det((B')^T B') = \vol(w_{i+1},\ldots,w_k, w_{k+1})^2$ is the quantity we are interested in bounding. Recall that $v_{k+1}' = 0$ by assumption. For clarity,
\[(B')^T = 
\begin{bmatrix}
\alpha_i^{k+1} v_i' + \cdots + \alpha_k^{k+1} v_k' \\
\alpha_i^{i+1} v_i' + v_{i+1}' \\
\alpha_i^{i+2} v_i' + \alpha_{i+1}^{i+2} v_{i+1}' + v_{i+2}'\\
\vdots \\
\alpha_i^{k} v_i' + \cdots + v_k'
\end{bmatrix}
\]

Note that $(B')^T$ is the same as $B^T$ from the proof of \cref{nonzero-claim} except for moving the last row to the position of the first row. This change is just for convenience in this proof.

Define the following coefficients matrix $C \in \RR^{(k-i+1)\times(k-i+1)}$:
\begin{align*}
C 
= 
\begin{bmatrix}
\alpha_i^{k+1} & \cdots & & & \alpha_k^{k+1} \\
\alpha_i^{i+1} & 1 & 0 & 0 & 0\\
\alpha_i^{i+2} & \alpha_{i+1}^{i+2} & 1 & 0 & 0\\
\vdots \\
\alpha_i^{k} & \alpha_{i+1}^{k} && \cdots & 1
\end{bmatrix}
&= 
\begin{bmatrix}
1 & 0 &&& 0 \\
\alpha_i^{i+1} & 1 & 0 & 0 & 0\\
\alpha_i^{i+2} & \alpha_{i+1}^{i+2} & 1 & 0 & 0\\
\vdots \\
\alpha_i^{k} & \alpha_{i+1}^{k} && \cdots & 1
\end{bmatrix}
+
\begin{bmatrix}
1\\
0\\
\vdots\\
\vdots\\
0
\end{bmatrix}
\begin{bmatrix}
(\alpha_i^{k+1} - 1) & \alpha_{i+1}^{k+1} & \cdots & \alpha_k^{k+1} 
\end{bmatrix}\\
&:= C' + e_1 x^T
\end{align*}

Define $W' = \left[v_i'|\cdots|v_k'\right]$. By construction, $(B')^T = C(W')^T$. Therefore
\[
(W')^T W := D' = \diag\left(\|v_{i}'\|_2^2,\ldots,\|v_{k}'\|_2^2\right)
\]

It follows that

\begin{align*}
\det((B')^T B')
&= \det(C(W')^T W' C^T) \\
&= \det(C)^2 \det(D') \\
&= \det(C)^2 \prod_{j=i}^k \|v_j'\|_2^2
\end{align*}

It remains to show that $|\det(C)| \leq (1+\sqrt{k})$. We may assume that $\alpha_i^{k+1} \geq 0$ by taking the negative of the first column if necessary. This does not affect the magnitude of the determinant. Note that all eigenvalues of $C'$ and $(C')^{-1}$ are $1$. Further, 

\begin{equation}\label{eqn-tight-two}
\|x\|_2^2 \leq k-i+1 \leq k
\end{equation}
\begin{align*}
|\det(C)|
&= |(1 + x^T (C')^{-1} e_1 )| \cdot |\det(C')| \hspace{0.5in} \text{ [by \cref{det-lemma}]}\\
&= |1 + x^T (C')^{-1} e_1|  \\
&\leq 1 + \sqrt{k}\lambda_{\max}((C')^{-1})\\
&= 1 + \sqrt{k}
\end{align*}

\end{proof}

We now provide a simple example where the output of the greedy algorithm is at best $\sqrt{k}$-locally optimal, thus demonstrating that the locality result for greedy is optimal up to the constant $1$.

\begin{theorem}[Tightness of Local Optimality]
There exists a point set $P = \{v_1,\ldots,v_k,v_{k+1}\}$ from which the greedy algorithm picks $V = \{v_1,\ldots,v_k\}$, and
\begin{equation}
\frac{\vol(V-v_1+v_{k+1})}{\vol(V)} = \sqrt{k}
\end{equation}
\end{theorem}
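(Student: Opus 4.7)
The plan is to exhibit an explicit point set in $\RR^k$ achieving the ratio $\sqrt{k}$. Guided by where the bound in the proof of \cref{zero-claim} is tight---namely when $v_{k+1}$ lies in $\Span(V)$ with uniform Gram--Schmidt coefficients $\alpha^{k+1}_2 = \cdots = \alpha^{k+1}_k = -1/\sqrt{k-1}$ against $v'_2, \dots, v'_k$---I would set, for $k \ge 2$,
\[
a \;=\; \frac{\sqrt{k}}{1+\sqrt{k-1}}, \qquad b \;=\; \sqrt{1-a^2},
\]
and take $v_1 = e_1$, $v_j = a\,e_1 + b\,e_j$ for $2 \le j \le k$, and $v_{k+1} = a\,e_1 - \tfrac{b}{\sqrt{k-1}} \sum_{j=2}^{k} e_j$, where $\{e_1, \dots, e_k\}$ is the standard basis of $\RR^k$. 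Note that $a \in (0,1)$ and $b > 0$ since $a^2 = k/(k+2\sqrt{k-1})$.

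Next I would verify that greedy outputs $V = \{v_1, \dots, v_k\}$. Because $a^2 + b^2 = 1$, every vector in $P$ has unit norm, so step~$1$ can select $v_1$ under any fixed tie-breaking rule. After step $j \ge 1$ every unchosen $v_l$ with $l \in \{j+1, \dots, k\}$ has perpendicular component $\|be_l\| = b$, while $v_{k+1}$'s perpendicular component equals $b\sqrt{(k-j)/(k-1)}$, which is at most $b$ and is strictly smaller once $j \ge 2$. A short induction then yields the claim.

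Finally I would compute both volumes directly. The Gram--Schmidt orthogonalization $v_1' = e_1$ and $v_j' = b\,e_j$ gives $\vol(V) = b^{k-1}$ immediately. For the swap, the matrix $[v_{k+1}\,|\,v_2\,|\,\cdots\,|\,v_k]$ expressed in the standard basis has first row equal to $(a, a, \dots, a)$; subtracting column~$1$ from each of the remaining columns replaces the first row by $(a, 0, \dots, 0)$ and exposes a $(k-1)\times(k-1)$ lower block equal to $b\bigl(I + \tfrac{1}{\sqrt{k-1}} \mathbf{1}\mathbf{1}^\top\bigr)$, whose only non-$b$ eigenvalue is $b(1+\sqrt{k-1})$, giving determinant $b^{k-1}(1+\sqrt{k-1})$. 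Hence $\vol(V - v_1 + v_{k+1}) = a\,b^{k-1}(1+\sqrt{k-1})$, and the ratio equals $a(1+\sqrt{k-1}) = \sqrt{k}$ by the choice of $a$.

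The only delicate point is the tie-breaking: at step~$1$ all $k+1$ vectors have unit norm, and at step~$2$ the perpendicular components of $v_2, \dots, v_k, v_{k+1}$ all coincide. I would resolve this either by adopting adversarial tie-breaking in favor of the subscript-indexed vectors $v_1, \dots, v_k$, or by nudging $\|v_1\|$ upward by a small $\delta > 0$ so that $v_1$ is uniquely selected at step~$1$; passing $\delta \to 0$ leaves the desired ratio $\sqrt{k}$ intact.
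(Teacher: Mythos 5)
Your construction is correct, but it is genuinely different from---and considerably more elaborate than---the one in the paper. The paper simply takes $v_1$ to be the all-ones vector in $\RR^k$ and $v_i=\sqrt{k}\,e_{i-1}$ for $i=2,\ldots,k+1$: all vectors have norm $\sqrt{k}$, greedy may pick $v_1$ first and then any $k-1$ of the remaining (mutually orthogonal) vectors, and the two volumes are read off immediately as $(\sqrt{k})^{k-1}$ and $(\sqrt{k})^{k}$. Your example is reverse-engineered from the tightness conditions in the proof of \cref{zero-claim} (uniform coefficients of magnitude close to $1/\sqrt{k-1}$, with $v_{k+1}\in\Span(V)$ so that $v_{k+1}'=0$), and the verification requires a column-reduction plus the spectrum of a rank-one perturbation of the identity; I checked the arithmetic ($a(1+\sqrt{k-1})=\sqrt{k}$, the lower block $b\bigl(I+\tfrac{1}{\sqrt{k-1}}\mathbf{1}\mathbf{1}^{\top}\bigr)$ with determinant $b^{k-1}(1+\sqrt{k-1})$, and the greedy run) and it all goes through for $k\ge 2$. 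What your version buys is that it exhibits exactly which inequalities in \cref{zero-claim} are saturated and confines the tie-breaking to steps $1$ and $2$ only, whereas the paper's example has ties at every step; what it costs is simplicity. Note that both proofs ultimately lean on favorable tie-breaking (the paper writes ``could start by picking $v_1$''), so your explicit discussion of that point is welcome but not a substantive difference; your $\delta$-perturbation of $\|v_1\|$ only breaks the tie at step $1$, not the residual tie at step $2$, so if you want to avoid adversarial tie-breaking entirely you would need to perturb further.
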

\begin{proof}
Let $P = \{v_1,\ldots,v_k,v_{k+1}\}$ where $v_1 \in \RR^k$ is the vector of all ones and $v_i = \sqrt{k} e_{i-1}$ for $i = 2,\ldots,k+1$. Since the magnitude of every vector in $P$ is $\sqrt{k}$, the greedy algorithm could start by picking $v_1$. The greedy algorithm will then pick any $k-1$ of the remaining $k$ vectors. Without loss in generality, assume that the algorithm picks $V = \{v_1,\ldots,v_k\}$. Then $\vol(V) = (\sqrt{k})^{k-1}$. On the other hand, $\vol(V-v_1+v_{k+1}) = (\sqrt{k})^k$. The result follows.
\end{proof}

\pagebreak

\section{Application to Standard Determinant Maximization}
The greedy algorithm for volume maximization was shown to have an approximation factor of $k!$ in \cite{cm-smvsm-09}. We provide a completely new proof for this result with a slightly improved approximation factor. 
\begin{theorem}
Let $P$ be a point set, $\greedy(P) = \{v_1,\ldots,v_k\}$ the output of the greedy algorithm, and $\maxvol_k(P)$ the maximum volume of any subset of $k$ vectors from $P$. Then
\begin{equation}
\vol(\greedy(P)) \geq \frac{\maxvol_k(P)}{\prod_{i=2}^k (1+\sqrt{i})}    
\end{equation}
\end{theorem}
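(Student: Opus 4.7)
The plan is to proceed by induction on $k$. The base case $k=1$ is immediate: the greedy algorithm picks the longest vector in $P$, so $\vol(\greedy(P)) = \maxvol_1(P)$, and $\prod_{i=2}^{1}(1+\sqrt{i})$ equals $1$ as an empty product.

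For the inductive step, write $G := \greedy(P) = \{v_1,\ldots,v_k\}$ and $G_{k-1} := \{v_1,\ldots,v_{k-1}\}$. The crucial structural fact is that $G_{k-1}$ is precisely the output of the greedy algorithm when asked for $k-1$ vectors, by the prefix-nested nature of greedy. The inductive hypothesis therefore gives $\maxvol_{k-1}(P) \leq \prod_{i=2}^{k-1}(1+\sqrt{i})\cdot \vol(G_{k-1})$, and combined with the factorization $\vol(G) = \vol(G_{k-1})\cdot \|v_k^\perp\|$ (where $\|v_k^\perp\|$ is the perpendicular distance from $v_k$ to $\Span(G_{k-1})$), the inductive step reduces to proving the one-level inequality
\[
\maxvol_k(P) \;\leq\; (1+\sqrt{k})\,\|v_k^\perp\|\,\maxvol_{k-1}(P).
\]

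To establish this reduction, fix an optimal $k$-subset $O \subseteq P$ and select $u^* \in O$ minimizing its perpendicular component $\|u^\perp_{O\setminus\{u\}}\|$ (equivalently, maximizing $\vol(O\setminus\{u\})$). The identity $\vol(O) = \vol(O\setminus\{u^*\}) \cdot \|u^{*\perp}_{O\setminus\{u^*\}}\|$ together with $\vol(O\setminus\{u^*\}) \leq \maxvol_{k-1}(P)$ reduces the task to showing $\|u^{*\perp}_{O\setminus\{u^*\}}\| \leq (1+\sqrt{k})\,\|v_k^\perp\|$. The greedy property directly provides $\|u^\perp_{G_{k-1}}\| \leq \|v_k^\perp\|$ for every $u\in P$, but this bound is measured with respect to $\Span(G_{k-1})$, whereas the decomposition requires a perpendicular with respect to the different $(k-1)$-dimensional subspace $\Span(O\setminus\{u^*\})$.

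This subspace-change is the main obstacle, and it is exactly what \cref{local-theorem} is built to control. The $(1+\sqrt{k})$-local optimality of $G$ bounds the volume of $G$ with any single $v_i$ swapped out for $u^*$, which I expect to translate---via the identities relating volume, perpendicular distance, and base volume---into the required $(1+\sqrt{k})$ factor bridging the two perpendicular distances to $(k-1)$-dimensional subspaces. Combined with the reduction above, this completes the induction and yields the claimed bound.
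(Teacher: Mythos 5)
Your induction has the right outer shell (greedy is prefix-nested, so the hypothesis applies to $G_{k-1}$, and the factorization $\vol(G)=\vol(G_{k-1})\cdot\|v_k^\perp\|$ is correct), but the entire content of the theorem is concentrated in the one-step inequality $\|u^{*\perp}_{O\setminus\{u^*\}}\|\le (1+\sqrt{k})\,\|v_k^\perp\|$, and this is exactly the step you leave as ``I expect to translate.'' It does not follow from \cref{local-theorem}. Local optimality controls $\vol(G-v_i+u^*)/\vol(G)$, i.e.\ the ratio $d(u^*,\Span(G-v_i))/d(v_i,\Span(G-v_i))$ --- a statement about distances to spans of \emph{greedy} vectors. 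What you need is a bound on $d(u^*,\Span(O\setminus\{u^*\}))$, a distance to the span of $k-1$ \emph{optimal} vectors, which is a different $(k-1)$-dimensional subspace about which \cref{local-theorem} says nothing. The only generic handle you have is that every point of $P$ lies within $h:=\|v_k^\perp\|$ of $L:=\Span(G_{k-1})$; projecting the $k$ points of $O$ into $L$ yields a linear dependency $\sum_u c_u \Pi(u)=0$, and normalizing $\|c\|_\infty=1$ gives only $\min_{u}d(u,\Span(O\setminus u))\le \|c\|_1 h\le k h$. That factor of $k$ per level is precisely the classical argument, and it reproduces the old $k!$ bound rather than $\prod_{i=2}^k(1+\sqrt{i})$. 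Indeed, one can place $k$ points all within distance $h$ of a $(k-1)$-dimensional subspace for which $\min_u d(u,\Span(O\setminus u))=kh$ (e.g.\ $k$ symmetric simplex directions in $L$ each lifted by $+he_k$), so any proof of your one-step lemma would have to exploit the global optimality of $O$ or the greedy coefficient structure in an essential way that you have not supplied.

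The paper avoids this obstacle entirely with a hybrid argument: it orders the optimal set $S=\{w_1,\ldots,w_k\}$ so that each intermediate configuration $\{v_1,\ldots,v_i,w_{i+1},\ldots,w_k\}$ is itself the greedy output on an auxiliary ground set $W_i$, and therefore \cref{local-theorem} applies \emph{verbatim} to each of the $k$ swaps, with the swap of the $i$-th greedy vector costing only $1+\sqrt{k+1-i}$. If you want to salvage your inductive framing, you would need to prove the one-step inequality directly, and at present there is no argument for it.
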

\begin{proof}
Let $S \subseteq P$ be the set of $k$ vectors with maximum volume. 
Without loss of generality and for simplicity of exposition, we assume that $\greedy(P) \cap S = \varnothing$ (the proof still goes through if this is not the case). We will order $S$ in a convenient manner. 

Consider the set $W_1 = \{v_1\} \cup S$ with $k+1$ elements. Perform the greedy algorithm on $W_1$ with $k$ steps. Clearly, greedy will choose $v_1$ first and then some $k-1$ of the remaining vectors. Label the left out vector $w_1$.

Inductively define $W_{i+1} = \{v_1,\ldots,v_i,v_{i+1}\} \cup (S - \{w_1,\ldots,w_i\})$, which has size $k+1$. Perform greedy on $W_{i+1}$ with $k$ steps. The first $i+1$ vectors chosen will be $v_1,\ldots,v_i,v_{i+1}$ by definition. Call the left out vector $w_{i+1}$. We now have an ordering for $S = \{w_1,\ldots,w_k\}$.

Starting with the greedy solution, we will now perform $k$ swaps to obtain the optimal solution. Each swap will increase the volume by a factor of at most $1+\sqrt{k}$. Initially, our solution starts with $\greedy(P) = \{v_1,\ldots,v_k\}$. Note that this is also the output of greedy when applied to the set $\greedy(P) \cup \{w_k\} = W_k$. Swapping in $w_k$ in place of $v_k$ increases our volume by a factor of at most $1+\sqrt{k}$. 

Our current set of vectors is now $\{v_1,\ldots,v_{k-1},w_k\}$. By the ordering on $S$, this is also the greedy output on the set $W_{k-1} = \{v_1,\ldots,v_{k-1},w_{k-1},w_k\}$. Therefore, we may swap in $w_{k-1}$ in place of $v_{k-1}$ in our current set of vectors by increasing the volume by at most a factor of $(1+\sqrt{k})$. Proceeding in this manner, we can perform $k$ swaps to obtain the optimal solution from the greedy solution by increasing our volume by a factor of at most $(1+\sqrt{k})^k$. 

To obtain the slightly better approximation factor in the theorem statement, we observe that in the proof of \cref{local-theorem}, swapping out the $i^{\text{th}}$ vector from the greedy solution for a vector that was not chosen increases the volume only by a factor of $(1+\sqrt{k+1-i}) \leq 1 + \sqrt{k}$ (\cref{eqn-tight-one},\cref{eqn-tight-two}), and that swapping out the $k^{\text{th}}$ vector does not increase the volume at all. Therefore, the approximation factor of greedy is at most
\[\prod_{i=1}^{k-1} (1+\sqrt{k+1-i}) = \prod_{i=2}^k (1+\sqrt{i})\]

\end{proof}

\begin{remark}
Note that $\prod_{i=2}^k (1+\sqrt{i}) < 2^k \sqrt{k!}$ for $k \geq 7$, which is $(k!)^{\frac{1}{2} + o(1)}$. While the improvement in the approximation factor is quite small, we emphasize that the proof idea is very different from the $k!$ guarantee obtained in \cite{cm-smvsm-09}.
\end{remark}

\pagebreak

\section{Experiments}
\label{experiments-section}
In this section, we measure the local optimality parameter for the greedy algorithm empirically. We use two real world datasets, both of which were used as benchmarks for determinant maximization in immediately related work (\cite{mahabadi2019composable, li2016efficient}: 
\begin{itemize}
    \item \textbf{MNIST} \cite{lecun1998gradient}, which has $60000$ elements, each representing a $28$-by-$28$ bitmap image of a hand-drawn digit; 
    \item \textbf{GENES} \cite{batmanghelich2014diversifying}, which has $10000$ elements, with each representing the feature vector of a gene. The data set was initially used in identifying a diverse set of genes to predict breast cancer tumors. After removing the elements with some unknown values, we have around $8000$ points. 
\end{itemize}
We measure the local optimality parameter both as a function of $k$, and as a function of the data set size as explained in the next two subsections.
\subsection{Local Optimality for Real/Random Datasets as a Function of $k$}

\paragraph{\textbf{Experiment Setup:}} For both MNIST and GENES, we consider a collection of $m = 10$ data sets, each with $n = 3000$ points chosen uniformly at random from the full dataset. We ran the greedy algorithm for $k$ from $1$ to $20$ and measured the local optimality value $(1+\eps)$ as a fucntion of $k = 2,4,\ldots,20$ for each of the 10 data sets in the collection.
More precisely, for each such $k$, we took the maximum value of $(1+\epsilon)$ over every data set in the collection. The reason we take the worst value of $(1+\epsilon)$, is that in the context of composable coresets, we require the guarantee to hold for each individual data set to be $(1+\eps)$-locally optimal. We repeated this process for $5$ iterations and took the average. We plot this value as a function of $k$. 

Further, to compare against a random data set, for both MNIST and GENES, we repeated the above experiment against a set of random points of the same dimension sampled uniformly at random from the unit sphere. 

\paragraph{\textbf{Results:}} As shown in \cref{fig:real_and_random}, while the real world data sets have local optimality value $(1+\eps)$ higher than the random data sets, they are both significantly lower than (less than $1.4$) the theoretical bound of $(1+\sqrt{k})$. This suggests that real world data sets behave much more nicely and are closer to random than the worst case analysis would suggest, which explains why greedy does so well in practice.

For the purpose of diversity maximization, the regime of interest is when $k \ll n$. However, we wanted to verify that the local optimality value does not increase much even when $k$ is much larger and closer to $n$. Since measuring local optimality is expensive when both $k$ and $n$ are large, we ran the same experiment again, except with $n = 300$ points per point set, and measuring the local optimality at $k = 1,50,100,\ldots,300$ in steps of $50$. Again, as seen in \cref{fig:real_and_random_largek}, local optimality stays much below $1+\sqrt{k}$ (in fact less than $1.5$) for larger values of $k$ as well. 

\begin{figure}[H]
    \centering
    \includegraphics[scale=0.5]{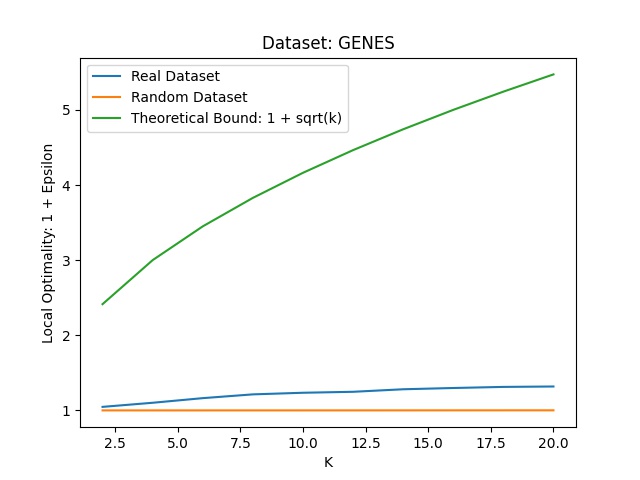}
    \includegraphics[scale=0.5]{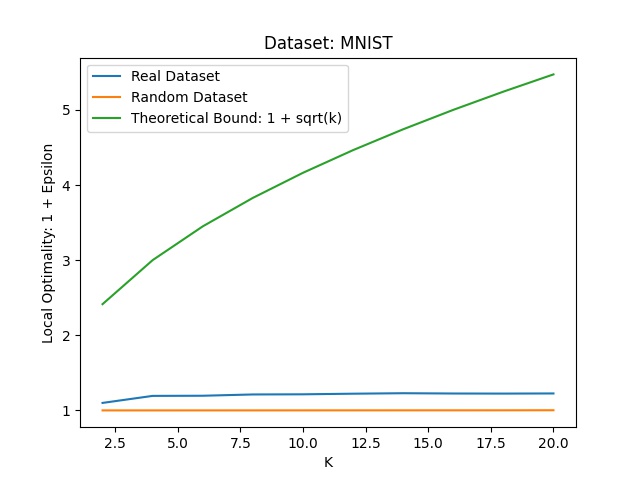}
    \caption{Local Optimality ($1+\eps$) against $k$ for GENES and MNIST datasets, and random datasets of the same dimension. Each stream had $10$ point sets of size $3000$, with $k$ ranging from $1$ to $20$.}
    \label{fig:real_and_random}
\end{figure}

\begin{figure}[H]
    \centering
    \includegraphics[scale=0.5]{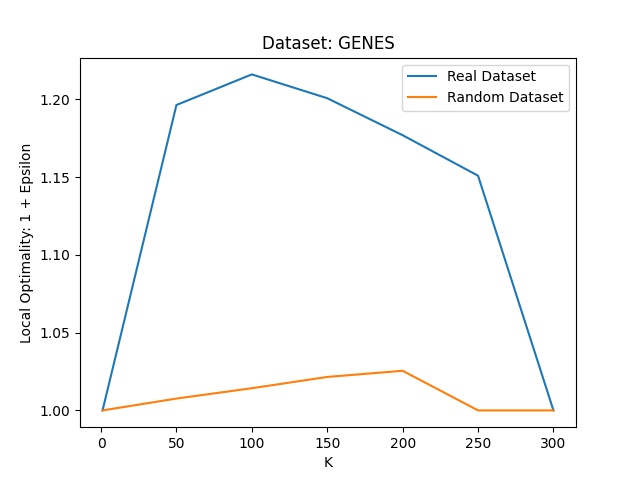}
    \includegraphics[scale=0.5]{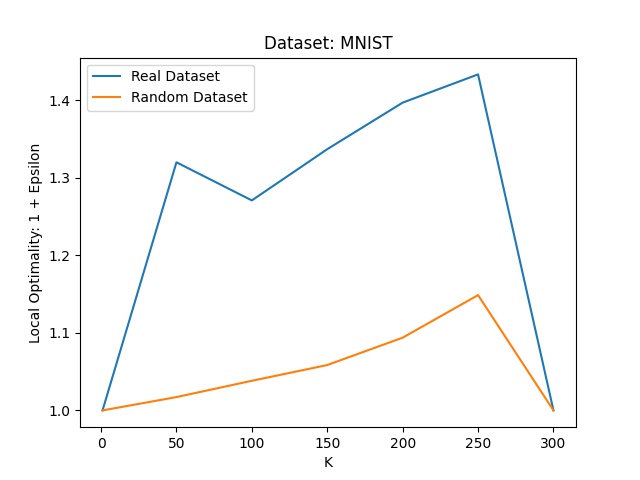}
    \caption{Local Optimality ($1+\eps$) against $k$ for GENES and MNIST datasets, and random datasets of the same dimension. Each stream had $10$ point sets of size $300$, with $k$ from $1$ to $300$ in steps of $50$. Note that when $k \in \{1,n\}$, we trivially have that $(1+\epsilon) = 1$.}
    \label{fig:real_and_random_largek}
\end{figure}

\subsection{Local Optimality as a Function of the Size of Point Sets}
\paragraph{\textbf{Experiment Setup:}} Here, we fix the value of $k \in \{5,10,15,20\}$ and compute the local optimality value $(1+\epsilon)$ while increasing the size of the point sets. The point set size is varied from $500$ to $4000$ in intervals of $500$. For each point set size, we chose a stream of $10$ random point sets from the dataset and took the maximum value over $10$ iterations. Once again, we did this on MNIST and GENES and took the average of $5$ iterations.

\paragraph{\textbf{Results:}} As shown in \cref{fig:varying_num_points}, the local optimality parameter remains very low (lower than $1.2$) regardless of the number of points in the data set, which is much smaller than $(1+\sqrt{k})$.

\begin{figure}[H]
    \centering
    \includegraphics[scale=0.5]{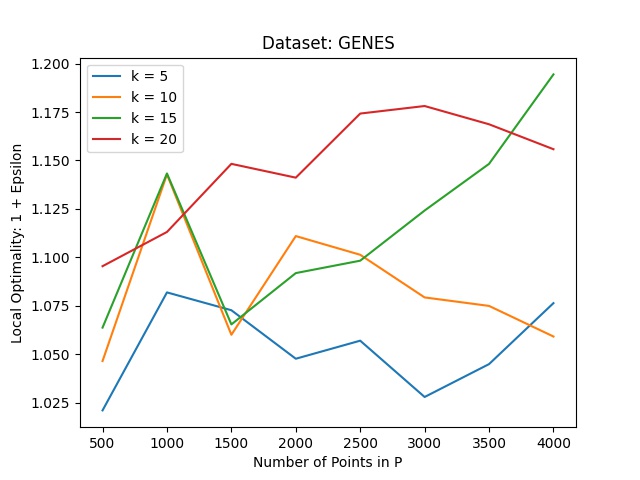}
    \includegraphics[scale=0.5]{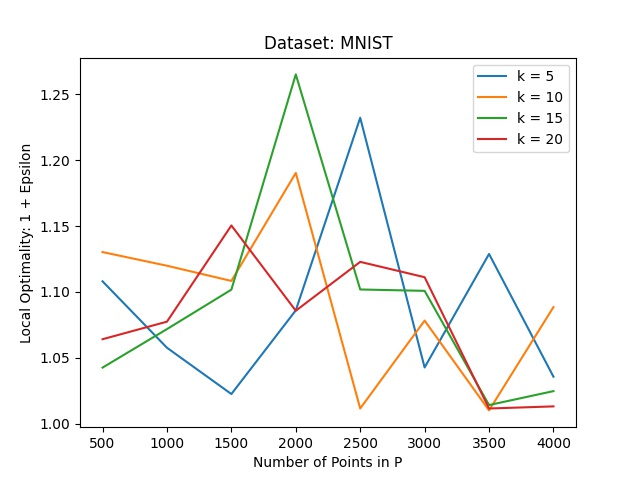}
    \caption{Local Optimality ($1+\eps$) against Number of Points in the Base Set for $k = 5,10,15,20$.}
    \label{fig:varying_num_points}
\end{figure}

\section{Conclusion}
In this work, we provided an almost tight analysis of the greedy algorithm for determinant maximization in the composable coreset setting: we improve upon the previous known bound of $C^{k^2}$ to $O(k)^{3k}$, which is optimal upto the factor $3$ in the exponent. We do this by proving a result on the local optimality of the greedy algorithm for volume maximization. We also support our theoretical analysis by measuring the local optimality of greedy over real world data sets. It remains an interesting question to tighten the constant in the exponent or otherwise provide a lower bound showing that $3$ is in fact optimal.

\bibliographystyle{abbrv}

\end{document}